\def\qed{\hfill\ensuremath{\square}}
\def\cd_lattice{(1)} 
\newtheorem{remark}{Remark}[section]
\newtheorem{lemma}{Lemma}[section]
\newtheorem{proposition}{Proposition}[section]
\newtheorem{definition}{Definition}[section]
\newenvironment{proof}{\noindent\emph{Proof:}}{\qed}
\def\k{$k$}
\def\RR{\mathds{R}}
\def\D{\mathcal{D}}
\title{An Algebraic Topological Perspective to Privacy:\\ Numerical and Categorical Data\thanks{This work was supported by United Technologies Research Center.}}
\author{Alberto Speranzon\hspace*{1cm}\and Shaunak D. Bopardikar\thanks{Alberto Speranzon was with United Technologies Research Center at the time when this work was developed. He is now with Honeywell Aerospace -- Advanced Technology, \texttt{alberto.speranzon@honeywell.com}. Shaunak D. Bopardikar is with United Technologies Research Center, Inc. \texttt{bopardsd@utrc.utc.com}.}}
\begin{document}

\maketitle

\begin{abstract}
In this paper, we cast the classic problem of achieving \k-anonymity for a given database as a problem in algebraic topology. Using techniques from this field of mathematics, we propose a framework for \k-anonymity that brings new insights and algorithms to anonymize a database. We begin by addressing the simpler case when the data lies in a metric space. This case is instrumental to introduce the main ideas and notation. Specifically, by mapping a database to the Euclidean space and by considering the distance between datapoints, we introduce a simplicial representation of the data and show how concepts from algebraic topology, such as the nerve complex and persistent homology, can be applied to efficiently obtain the entire spectrum of \k-anonymity of the database for various values of~\k~and levels of generalization. For this representation, we provide an analytic characterization of conditions under which a given representation of the dataset is \k-anonymous. We introduce a weighted barcode diagram which, in this context, becomes a computational tool to tradeoff data anonymity with data loss expressed as level of generalization. Some simulations results are used to illustrate the main idea of the paper. We conclude the paper with a discussion on how to extend this method to address the general case of a mix of categorical and metric data. 
\end{abstract}

\section{Introduction}

Recent times have seen a revolution in computing technologies. Third-party computing services, such as the Cloud, have been creating new paradigms for both, data storage and computation. Such technologies require one to repeatedly revisit the basic question of ``How do we protect the data from a privacy perspective?". Although this problem originated in database theory and computer science applications, it has recently expanded to domains such as systems and control. In control systems, there is always an information flow between sensors/actuators and controllers/plants and even between controllers in the case of distributed or cloud-based control. In critical cyber-physical systems such as power or transportation networks, where information about users and utility companies are exchanged or information about multiple users is fused, privacy has become a critical element to be considered~\cite{TerraSwarm}. Clearly, as the world is becoming more connected and loops are increasingly being closed over millions of sensors~\cite{Vincentelli:15}, privacy is becoming a top priority in the control community.

In this paper, we consider static data collected within a database that, in the context of cyber-physical systems, could represent log/monitoring data that needs to be analyzed offline to determine overall system performance, enterprise level fault detection and propagation, forensic analysis, etc. Although several approaches have been proposed to address different aspects of the privatization of data within a database, this paper provides a novel framework and perspective to address the most classical version of the problem using concepts and tools from algebraic topology.

\subsection{Literature}

Among several methods that have been developed for database privacy, a classic and popular approach is \emph{\k-anonymity}, which is a mechanism for protecting privacy of individuals represented as entries in a database~\cite{sweeney2002}. The idea consists of removing all attributes from a database that can be used as unique identifiers but retain a set of attributes, called \emph{quasi-identifiers}, for which identification may be possible but these are necessary for analysis. To such quasi-identifiers -- ZIP code, date of birth, GPS location, energy usage, data/time,  etc. -- one applies a transformation that generalizes their value so that data records become indistinguishable. More precisely, for a given value of~\k, the original database is modified so that at least~\k individuals in the database have identical quasi-identifiers. This is achieved by generalizing numeric or text attributes: for example, the ZIP code can be generalized so that a certain number of the least significant digits are suppressed as $46532 \rightarrow 465\!*\!*$, age could be generalized to intervals, $35 \rightarrow [30, 40]$, and the gender could be generalized as $\{M,F\} \rightarrow Person$.

The problem of computing an optimal \k-anonymous version of a database has been shown to be NP-hard~\cite{meyerson2004}. However, efficient algorithms such as Incognito~\cite{lefevre2005} and its variants or greedy clustering-based algorithms~\cite{byun2007} have been proposed to achieve \k-anonymity. A multi-dimensional extension of the greedy approaches has been addressed in~\cite{lefevre2006}, which results into a representation of the database that is reminiscent of classic grid-based paintings by Mondrian. In multi-dimensional settings, a data aggregation scheme based on Hilbert curves has been proposed in~\cite{kim2013}. 

Algebraic topology is a branch of mathematics that leverages tools and concepts from abstract algebra to study topological spaces. For example, a simple model for sensor network is a set of points in a (multidimensional) space in which two points (or sensors) are \emph{neighbours} if they are within a specified distance of each other. Then, concepts from algebraic topology, such as homology, have been used to detect \emph{holes} in sensor networks~\cite{ghrist2005,desilva2007}. Distributed algorithms to localize holes in sensor networks using related concepts have been addressed in~\cite{muhammad2006} and in~\cite{tahbazsalehi2010}. Recently such methods have been also used for filtering and position estimation in~\cite{JD-AS-RG:13,RG-DL-JD-AS:12}.

One concept of privacy which has been applied to several control problems is that of differential privacy~\cite{CD-AR:13}. Informally, this concept means that for a given database, if any single individual is removed from the database, then no output of a computation run on the database would become significantly more or less likely. This concept has been applied to achieve differential privacy of Kalman filtering and estimation problems~\cite{JLN-GJP:14}, to ensure a level of truthfulness in electric vehicle charging applications \cite{SH-UT-GJP:15}, to achieve average consensus in a private manner \cite{YM-RM:14}, to name a few. While the concept of differential privacy is very general and is applicable to dynamic databases, the resulting mechanism relies very strongly on the type of function/query that needs to be computed on the database. In contrast, \k-anonymity is a static concept but is independent of any computation to be carried out on the database and therefore suitable in the context of offline analysis. That said, there are situations when \k-anonymity is not sufficient and individuals can be re-identified despite anonymization. Approaches to deal with such cases have been considered such as $\ell$-diversity~\cite{ldiversity2007} and $t$-closeness~\cite{tcloseness2007}. Even though such advanced concepts have been proposed and privacy metrics continue to be an active research area, \k-anonymity is still widely used. 

Extensions of the proposed approach to other metrics will be a subject of future study.

\subsection{Contributions}
This paper introduces a novel perspective to data privacy based on algebraic topology. In particular, we address the case when the data lies in a metric space. By defining two datapoints that lie within a specified radius as neighbours, we show that the representation falls within the natural setting of a \v{C}ech (or in general, a Nerve) complex~\cite{ghrist2014}. By increasing the radius (generalization), we show that the sequence of \v{C}ech complexes result into a \emph{filtration}, i.e. nested complexes. This further implies that tools such as persistent homology can be applied to efficiently obtain the entire spectrum of \k-anonymity of the database for various values of the generalization. The benefit of this approach is that once the family of complexes is built, for various generalization values, we can apply fast and scalable persistent homology algorithms, such as Perseus~\cite{Perseus:15} to determine the tradeoffs. Furthermore, the persistent diagram not only provides the tradeoffs between a generalization and the value of $k$, but also show how many equivalent classes are formed for a given generalization achieving a certain~\k-anonymity, a metric that has an impact on the anonymized data quality~\cite{RD-IR-DW:08}.

For this representation, we provide an analytic characterization of conditions under which a given representation of the dataset is \k-anonymous. Finally, we discuss how this method can be extended to address the general case of a mix of categorical and metric data. 
  
\subsection{Organization of this paper}
This paper is organized as follows. The problem formulation is presented in Section~\ref{sec:problem}. Background results and concepts from algebraic topology 
are reviewed in Section~\ref{sec:topology}. The proposed approach is presented in Section~\ref{sec:approach1} for numerical data along with some simulation results. Extension of the method described in Section~\ref{sec:approach1} to mixed categorical and numerical data is discussed in Section~\ref{sec:approach2}. 

\section{Problem formulation}\label{sec:problem}

Let us consider a database table $T(A_1,A_2,\dots,A_m)$ consisting of~$N$ rows, where each $A_i \in \D$ are various attributes that in general can take the form of \emph{numeric} and/or \emph{categorical} values, i.e., the domain $\D$ can either be a set of discrete or continuous values. Without loss of generality, we can identify with $Q_T = \{A_1,A_2,\dots,A_d\}$ a set of~$d$ attributes that we define as \emph{quasi-identifiers}, namely attributes that can be joined with external information/databases so that private information can be obtained. Typical examples of private data that could be obtained are names of individuals, salaries, etc.

Another database table $\bar{T}(\bar{A}_1, \bar{A}_2, \dots, \bar{A}_m)$ consisting of~$N$ rows is said to be a \emph{generalization} $\bar{T} = G(T)$ of the table~$T$ if, for every row $T_j$ of $T$, 
$$
	Q_{T_j} \subset Q_{\bar{T}_j}\,.
$$
In this paper, as said previously, we will be focusing on the concept of \k-anonymity for privacy, that is formally defined as follows.

\begin{definition}[\k-anonymity \cite{sweeney2002}]
	Consider a generalized database~$\bar{T}$ and a quasi-identifier set~$Q_{\bar{T}}$. The set~$Q_{\bar{T}}$ is said to have the \k-anonymity property if and only if each unique tuple in the projection of~$\bar{T}$ on~$Q_{\bar{T}}$ occurs at least~$k$ times in $\bar{T}$.
\end{definition}

Given a database~$T$, the problem of \k-anonymity is thus to determine a generalization function~$G(.)$ so that the resulting database $\bar{T} = G(T)$ is \k-anonymous. Clearly, one may simply generalize every entry and find the smallest set that generalizes every row of~$T$. However, this trivial method would completely destroy the information content in the original database. The problem is how to minimize such as \emph{over-generalization} of the quasi-identifiers. This notion will be made precise in Section~\ref{sec:approach1}. 

\section{Background on Topological Methods} \label{sec:topology}

In this section, we provide a summary of some useful concepts from algebraic topology that will be used in our approach. Interested readers may refer to~\cite{ghrist2014} and~\cite{ghrist2008} for additional details on these concepts.

\begin{definition}[\v{C}ech complex]
Given a collection of points $\{x_i\}\in\mathbb{R}^n$, the \emph{\v{C}ech complex} is the abstract simplicial complex whose \k-simplices are determined by unordered $(k+1)-$tuples of points $\{x_i\}_0^k$ whose closed $\epsilon$-ball neighborhoods have a point of common intersection.
\end{definition}

\subsection{Simplicial Homology}
	
Homology is an algebraic characterization of ``holes'' in a topological space. The central notion is that of a boundary homomorphism, which in the context of simplicial complexes, encodes how simplices are attached to their lower dimensional facets. To define (simplicial) homology, of a complex~$C$, we choose an ordering of each simplex, in the same way directed graphs are ordered. Given such ordering we consider~$\mathbb{R}$-vector spaces $\mathcal{C}_k(C)$ with basis the oriented $k$-simplices. We thus have that $\mathcal{C}_\bullet$, forms a sequence of vector spaces, which we call chain complex. A \emph{boundary homomorphism} is defined as the linear map $\partial_k:\mathcal{C}_k(C) \rightarrow \mathcal{C}_{k-1}(C)$ given by associating each basis element of~$\mathcal{C}_k(C)$ to the formal sum of its (oriented) faces of dimension~$k-1$. The boundary operator~$\partial=\{\partial_k\}$ thus encodes the assembly instructions of~$C$. It turns out that the~$k^{th}$ homology group of the complex~$C$, $H_k(C)$ is given by
$$
	H_k(C) = Z_k / B_k = \mathrm{ker}\: \partial_k/ \mathrm{im}\: \partial_{k+1}\,.
$$
The group~$Z_k =  \mathrm{ker}\: \partial_k$ is called the~$k$-th cycle group and its elements (chains) represent $k$-cycles. We have that $B_k = \mathrm{im}\: \partial_{k+1}$ is the $k$-th boundary group whose elements are $k$-boundaries. The quotient space~$H_k(C)$ thus represents all the $k$-cycles that are not boundaries of~$k+1$ simplices, namely cycles that represent~$k$-dimensional ``holes''. The homology of the complex~$C$ is then $H_\bullet(C) = \{H_k(C)\}$. 
	
In this paper, we will use the notion of dimension of the $k$-th homology, that is the dimension of the vector space~$H_k(C)$, $\mathrm{dim} H_k(C)$. In particular, the $k$-th homology group $H_k(C)$ is said to be \emph{trivial} if $\mathrm{dim} H_k(C) = 0$.
	
\subsection{Persistent Homology}

Let us consider a sequence of complexes $C^\epsilon$ with $\epsilon = \{\epsilon_1,\epsilon_2, \dots, \epsilon_M\}$, more specifically the sequence of \v{C}ech complexes $\{C^{\epsilon_i}\}_{i=1}^N$, for increasing $\epsilon_i \in \mathbb{R}_{\geq 0}$, $\epsilon_i \leq \epsilon_j$ for $i\neq j$. There are clearly inclusion maps between such complexes
$$
	C^{\epsilon_1} \stackrel{\imath}{\hookrightarrow} C^{\epsilon_2} \stackrel{\imath}{\hookrightarrow} \cdots \stackrel{\imath}{\hookrightarrow} C^{\epsilon_{M-1}} \stackrel{\imath}{\hookrightarrow} C^{\epsilon_M}\,.
$$
Rather than studying the homology of each complex for each value of the parameter~$\epsilon$, one can then study the homology of the inclusions $\imath: H_\bullet(C^{\epsilon_i}) \rightarrow H_\bullet(C^{\epsilon_j})$ for $i < j$. Such maps are important as they capture topological features that persist over the parameter space. 
	
\begin{figure}[!t]
	\centering
	\includegraphics[width=\hsize]{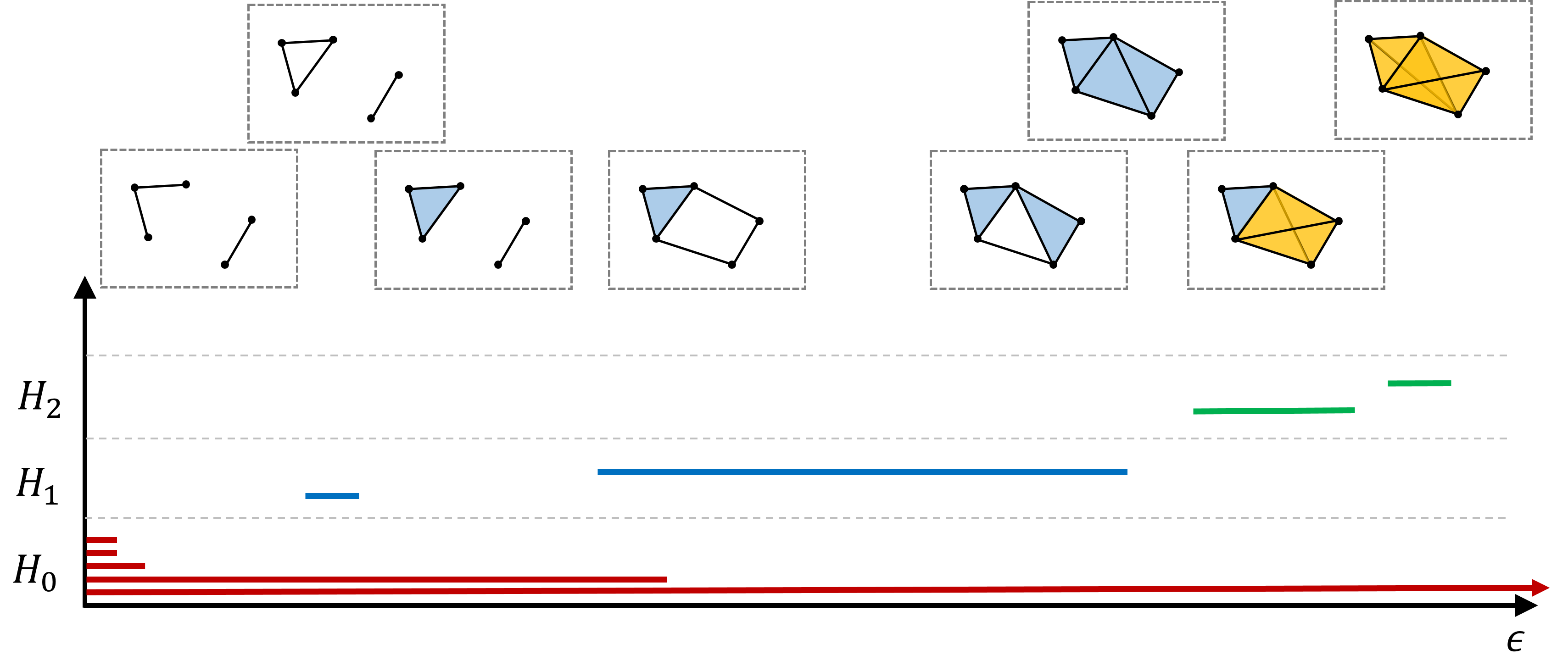}
	\caption{Example of the barcodes.}\label{fig:ex_persistent}
\end{figure}	
	
The dimension of the homology groups as a function of the single parameter~$\epsilon$ can be plotted in a diagram, called the \emph{barcodes diagram}, see~\cite{ghrist2008}. We show a simple example in Figure~\ref{fig:ex_persistent} where $\epsilon$ is the radius of a ball around each vertex, and where there is a $k$-simplex whenever~$k+1$ circles have non-empty intersections. For small values of~$\epsilon$, we have that the number of connected components, $\dim H_0$, is the number of vertices (0-simplices), and as~$\epsilon$ increases, more vertices get connected, resulting in components to merge till a single connected component is obtained. For a small value of~$\epsilon$ there are no high dimensional holes, however, at some point, before the first 2-simplex (blue triangle) gets filled in, such 2-simplex is not filled and generates a hole that quickly disappears. At a later value of~$\epsilon$, a large hole is formed at the time when the vertices form a single large connected component. As the~$\epsilon$ parameter increases further at some point the ``middle'' hole gets filled and the dimension of the first homology, $\dim H_1$ becomes zero again. As~$\epsilon$ further increases, tetrahedrons appear, first with an empty volume, namely a void, that disappear. Higher dimensional holes will likely occur, but we did not depict them. As $\epsilon$ further increases the complex will have trivial higher homology groups and only have a single connected component.
	
In this paper we will make use the persistent homology in the context of \k-anonymity. 
	
\section{$k$-Anonymity via Persistent Homology:\\ Numerical Attributes} \label{sec:approach1}

In this section, we describe the algebraic topological approach toward achieving \k-anonymity. In this section, we will restrict our attention to the case of the attributes in the quasi-identifier set~$Q_T$ being all numeric/continuous variables, such as Age, Salaries, Taxes paid in a year, etc. In this case, we can clearly represent the set~$Q_T$ as a~$|Q_T|$-dimensional vector. We will assume that all the vectors are elements of a real vector space.

\begin{figure}[t!]
	\centering
	\includegraphics[width=0.7\hsize]{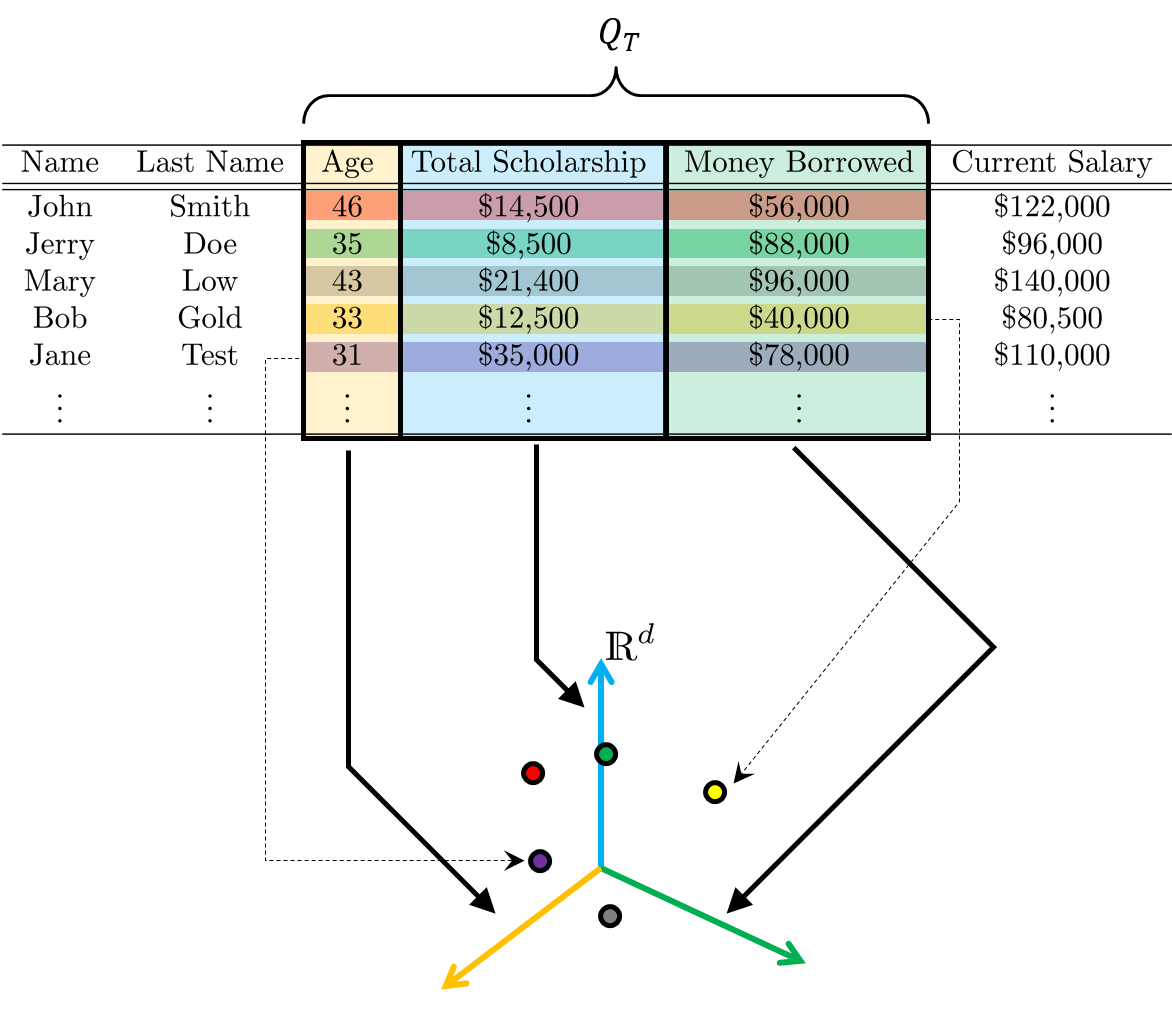}
	\caption{Example of a table where the first two columns are identifiers, the last column is the sensitive data and the middle three are the quasi-identifiers. Each entry in the table can be mapped, through the quasi-identifiers, to a point in the three-dimensional vector space,~$\RR^3$.}\label{fig:table_example}
\end{figure}

Figure~\ref{fig:table_example} shows an example of a table~$T$ with two identifiers (Name and Last Name), three quasi-identifiers $Q_T =\allowbreak \{\texttt{Age},\allowbreak \texttt{Total\space Scholarship},\allowbreak \texttt{Money\space Borrowed}\}$ and the sensitive column  \texttt{Current Salary}. As we show at the bottom of Figure~\ref{fig:table_example}, we can map the quasi-identifiers to a three-dimensional vector space (three dimensional as $|Q_T| = 3$), where each entry in the table corresponds to a point in~$\RR^3$.

In the following, we will often refer to entries of the table~$T$ as \emph{points} due to the previously described representation. Also, note that without any loss of generality, we can consider the data to take values within the hypercube~$\mathcal{M} = [0,1]^{|Q_T|}$, since one can always normalize the data accordingly.

The first definition is a direct application of the \v{C}ech complex as summarized in Section~\ref{sec:topology}. In this paper, we use this structure to capture the \k-anonymity property of the data. 

\begin{definition}[Anonymity Complex]
	Given a table~$T$ with~$N$ rows and a set of quasi-identifier~$Q_T$, let us consider the~$N$ points $\{p_i\}_1^N \in \mathcal{M}^{N}$. We define an \emph{anonymity complex}~$\mathcal{C}(p)$ the simplicial complex whose \k-simplices are determined by $(k+1)$ points $\{p_{i_0}, p_{i_1}, \dots, p_{i_k}\}_0^k$ such that closed $\epsilon$-ball neighborhoods centered around these points have at least one intersection point. We call the radius~$\epsilon$ the \emph{global generalization} strategy.
\end{definition}

\medskip

We now introduce an important building block.

\begin{definition}[Anonymity \k-simplex]
	Given a global generalization~$\epsilon$, we say that~$k$ points have the \k-anonymity property if all the closed $\epsilon$-ball neighborhoods of the~$k$ points all intersect in at least a point. In this case, we have that the~$k$ points form a \k-simplex, which we term as an \emph{anonymity \k-simplex} and denote with~$S_k$.
\end{definition}

\medskip

Figure~\ref{fig:anonymity_k_simplex_example}(a) shows an example of an anonymity 4-simplex for a given global generalization~$\epsilon$ while Figure~\ref{fig:anonymity_k_simplex_example}(b) shows an example where for the same value of~$\epsilon$, 4-anonymity cannot be achieved. 

\begin{figure}
	\centering
	\includegraphics[width=0.7\hsize,trim=85 125 85 85,clip]{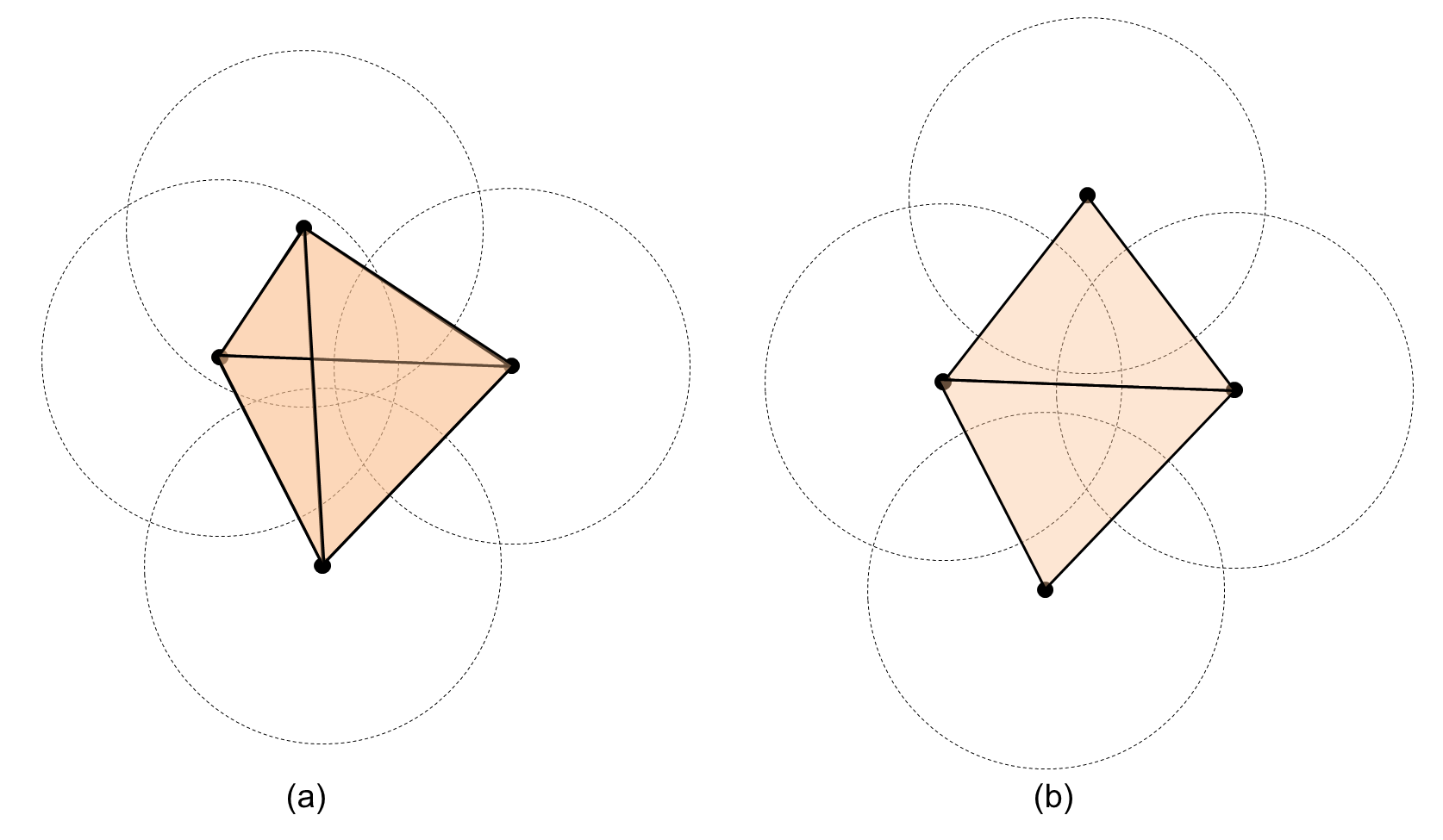}
	\caption{In (a), we show a anonymity 4-simplex representing the fact that there exists an~$\epsilon$ such that the 4 points can be anonymized. In (b), we show an example where there is no anonymity 4-simplex (indeed we have only two 2-simplices) as the selected~$\epsilon$ is not large enough, or equivalently the generalization is high enough, to ensure 4-anonymity.}\label{fig:anonymity_k_simplex_example}
\end{figure}

The following is a useful test to determine whether \k-anonymity can be achieved for a given value of $\epsilon$ or not.

\begin{lemma}\label{(lem:k_anonymity_complex}
	Given a set of points~$p=\{p_i\}_1^N$ corresponding to~$N$ rows of the table~$T$, given a global generalization~$\epsilon$ and the corresponding anonymity complex~$C(p)$, we have that the points~$\{p_i\}_1^N$ have the \k-anonymity property if and only if
	$$
		C(p) = \bigcup_i S_{\ell_i}\,
	$$
	where $S_{\ell_i}$ is the $i$-th anonymity $\ell_i$-simplex with $\ell_i \geq k$ for $i \in \mathds{N}_0$. We say in this case that the anonymity complex achieves \k-anonymity.\medskip
\end{lemma}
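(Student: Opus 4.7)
My plan is to prove both directions by unwinding the definition of an anonymity simplex and matching it to the combinatorial structure underlying a $k$-anonymous generalization. The crucial pivot is the observation that, in the $\epsilon$-ball model, a common generalized value for a group of rows $\{p_{i_0},\dots,p_{i_{\ell-1}}\}$ is precisely a point $q$ with $\|q - p_{i_j}\| \leq \epsilon$ for all $j$, i.e., a point in $\bigcap_{j} \bar{B}_\epsilon(p_{i_j})$; this non-empty common intersection is exactly what distinguishes an anonymity $\ell$-simplex in $C(p)$.

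For the forward direction, I would assume that $\{p_i\}_1^N$ has the $k$-anonymity property and produce the claimed decomposition. By definition of $k$-anonymity, there exists a generalization $\bar{T} = G(T)$ whose rows partition into equivalence classes according to their shared generalized quasi-identifier value, each class having cardinality at least $k$. Using the pivot observation above, the shared generalized value of each class lies in the common intersection of the $\epsilon$-balls of its members, so each class forms an anonymity $\ell$-simplex with $\ell \geq k$. Collecting these simplices together with their faces exhibits $C(p) = \bigcup_i S_{\ell_i}$ with $\ell_i \geq k$.

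For the backward direction, I would start from a decomposition $C(p) = \bigcup_i S_{\ell_i}$ with each $\ell_i \geq k$ and construct a $k$-anonymous generalization explicitly. By the anonymity-simplex definition, each $S_{\ell_i}$ admits a witness point $q_i \in \bigcap_{p_j \in S_{\ell_i}} \bar{B}_\epsilon(p_j)$. Assigning every row to the $q_i$ corresponding to some $S_{\ell_i}$ containing it yields a generalization in which each distinct generalized value has at least $\ell_i \geq k$ preimages, which is $k$-anonymity.

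The main obstacle I anticipate is entirely in the backward step: when a single vertex lies in several of the $S_{\ell_i}$, a valid generalization must send it to exactly one representative, and we must ensure that after this selection each representative $q_i$ still collects at least $k$ rows. This is the only non-routine point and is handled by a consistent selection rule over the covering. Once the geometric translation between ``shared generalized value'' and ``common $\epsilon$-ball intersection point'' has been pinned down, both directions are essentially definitional.
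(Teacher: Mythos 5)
Your translation between ``shared generalized value'' and ``common $\epsilon$-ball intersection point'' is exactly the pivot the paper uses, and your forward direction (partition the rows into equivalence classes of the generalization, observe that each class of size $\ell\geq k$ is an anonymity $\ell$-simplex) matches the paper's argument. The problem is the backward direction, and it sits precisely at the point you flag and then defer: the ``consistent selection rule over the covering'' does not exist in general, so this is a genuine gap rather than a routine detail. Concretely, take $k=3$ and five collinear points $p_1=0$, $p_2=0.1$, $p_3=1$, $p_4=1.9$, $p_5=2$ with $\epsilon=0.55$. The maximal simplices of $C(p)$ are the anonymity $3$-simplices on $\{p_1,p_2,p_3\}$ and $\{p_3,p_4,p_5\}$, so $C(p)$ is a union of anonymity $\ell_i$-simplices with $\ell_i\geq 3$; yet whichever representative $p_3$ is assigned to, the other group retains only two members, and no common intersection point exists for all five balls, so $3$-anonymity is unachievable. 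Hence no selection rule can repair an overlapping cover, and the backward implication fails if the union is read as a mere covering.

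The paper's own proof avoids this by silently strengthening the hypothesis: it requires the decomposition to be a partition of the point set, with $S_{\ell_i}\cap S_{\ell_j}=\emptyset$ for $i\neq j$, so that each block is already its own equivalence class and the backward direction becomes immediate (each row is generalized to a witness point of the unique simplex containing it). To close your argument you must do the same --- assume the $S_{\ell_i}$ are vertex-disjoint (equivalently, that each connected component of $C(p)$ is a single anonymity simplex, which is how Proposition 4.1 later phrases it) --- rather than appeal to an unspecified selection rule. Be aware that neither your proof nor the paper's addresses the converse subtlety that disjointness introduces: a $k$-anonymizable configuration can still produce stray cross-class simplices in the \v{C}ech complex, so the forward direction with disjoint $S_{\ell_i}$ requires its own caveat; but as far as comparison with the paper goes, the decisive defect in your plan is the unsubstantiated selection step.
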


\begin{proof}
	The points~$\{p_i\}_1^N$ have the \k-anonymity property if and only if we can sub-divide the set of points into subsets such that
	$$
		\{p_i\}_1^N = \underbrace{\{p_i\}_{1}^{i_1}}_{\Sigma_{\ell_1}} 
		\cup \underbrace{\{p_i\}_{i_1+1}^{i_2}}_{\Sigma_{\ell_2}}\cup \cdots 
		\cup \underbrace{\{p_i\}_{i_{N-1}+1}^N}_{\Sigma_{\ell_N}}\,,
	$$
	and to each subset $\Sigma_{\ell_i}(p)$, we can associate an anonymity $\ell_i$-simplex $S_{\ell_i}$ with $\ell_i \geq k$ for any~$i$ (for given fixed~$\epsilon$) and  $S_{\ell_i} \cap S_{\ell_j} = \emptyset$ for $i\neq j$.
	
	Given the previous set relations, we have that complex~$C$ associated with the set of points~$\{p_i\}$ is then given by the union of the $S_{\ell_i}$. 
\end{proof}

This result then establishes a natural connection between the properties of the anonymity complex, in terms of some of its subcomplexes, and the \k-anonymity property.

We further  explore how topological properties of the anonymity complex are related to the \k-anonymity property and how we can leverage that to find an ``optimal'' generalization. Let us first establish some topological properties of~$C$ and then define explicitly what we mean with ``optimal'' generalization.


\begin{proposition}\label{prop:homology_of_C}
	An anonymity complex~$C$, for a given~$\epsilon$, has the \k-anonymity property if and only if its homology groups $H_n(C)$ are trivial for any~$n > 0$, and every connected components is an $\ell$-simplex with $\ell\geq k$. Furthermore, when this is the case the number of equivalence classes generated by using the $\epsilon$ generalization is given by the dimension of the zero-th homology, $\dim H_0(C)$.
\end{proposition}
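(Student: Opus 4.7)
The plan is to piggy-back on Lemma~\ref{(lem:k_anonymity_complex}, which already characterizes \k-anonymity as the structural condition that $C$ decomposes as a (disjoint) union of anonymity $\ell_i$-simplices with $\ell_i \ge k$, and then to translate this structural statement into the homological language of the proposition. The two extra ingredients needed are (i) that a (closed, filled) $\ell$-simplex is contractible, hence has trivial reduced homology, and (ii) that simplicial homology commutes with disjoint union, so $H_n(C)=\bigoplus_i H_n(S_{\ell_i})$.

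For the forward direction, assume $C$ has the \k-anonymity property. By Lemma~\ref{(lem:k_anonymity_complex}, $C=\bigcup_i S_{\ell_i}$ with $S_{\ell_i}\cap S_{\ell_j}=\emptyset$ for $i\ne j$ and each $\ell_i\ge k$. Each $S_{\ell_i}$, being a single closed simplex, is contractible, so $H_n(S_{\ell_i})=0$ for $n>0$ and $H_0(S_{\ell_i})\cong\mathbb{R}$. Summing over the components gives $H_n(C)=0$ for $n>0$, while $\dim H_0(C)$ equals the number of $S_{\ell_i}$'s, i.e.\ the number of connected components of $C$. Moreover, each connected component of $C$ is by construction an anonymity $\ell_i$-simplex with $\ell_i\ge k$, yielding both conjuncts of the right-hand side.

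For the converse, suppose every connected component of $C$ is an anonymity $\ell$-simplex with $\ell\ge k$ (the triviality of the higher homology groups is then automatic by the same contractibility argument, so it is really the structural hypothesis that carries the content). Then $C$ is exactly the disjoint union $\bigcup_i S_{\ell_i}$ with $\ell_i\ge k$, and Lemma~\ref{(lem:k_anonymity_complex} yields the \k-anonymity property. Finally, since the \k-anonymity equivalence classes are precisely the subsets $\Sigma_{\ell_i}(p)$ of points sharing a common $\epsilon$-ball intersection point (the vertex sets of the $S_{\ell_i}$), the number of equivalence classes equals the number of connected components of $C$, which is $\dim H_0(C)$.

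The main subtlety to guard against is the temptation to drop the ``every connected component is an $\ell$-simplex'' hypothesis in the converse: triviality of $H_n(C)$ for $n>0$ alone does \emph{not} force \k-anonymity, since one can produce components that are contractible (hence have trivial higher homology) without being single simplices of sufficient dimension. Keeping both conditions on the right-hand side is therefore essential, and with them the proof reduces to the combinatorial content already established in Lemma~\ref{(lem:k_anonymity_complex} together with the elementary homology of a disjoint union of simplices.
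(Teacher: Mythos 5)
Your proof is correct and follows essentially the same route as the paper's: both directions reduce to Lemma~\ref{(lem:k_anonymity_complex}, using the decomposition of $C$ into disjoint anonymity simplices, the contractibility of each simplex, and the fact that homology splits as a direct sum over connected components. Your added observation that the higher-homology condition is redundant in the converse (the structural hypothesis on the components carries all the content) is a fair point the paper does not make explicit, but it does not change the argument.
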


\begin{proof}
	(If) From Lemma~\ref{(lem:k_anonymity_complex}, we know that we can decompose~$C$ into a finite number of disjoint anonymity \k-simplices. It is known~\cite{ghrist2014} that in this case $H_n(C) = \bigoplus_i H_n(S_{\ell_i})$, namely the $n$-th homology of~$C$ is given by the direct sum of the $n$-th homology of the anonymity \k-simplices. As the \k-simplices are simply connected spaces and contractible, they have trivial high order ($n > 0$) homology groups and $H_0(S_{\ell_i}) \approx \mathds{Z}$ for every~$i$.
	
	(Only if) Let us assume that~$C$ has $H_n(C) = \{0\}$ for $n > 0$, and $H_0(C)$ is non-trivial, and in particular let us assume that $H_0(C) \approx \mathbb{R}^s$. This means that the complex~$C$ has~$s$ connected components. From the hypothesis that the connected components are $\ell$-simplices with $\ell \geq k$, we know that each component is a anonymity simplex. Thus the anonymity complex~$C$ has the \k-anonymity property. 
\end{proof}

We are now able to connect topological properties of the anonymity complex with the \k-anonymity property. Of course, as it can be seen from Proposition~\ref{prop:homology_of_C}, the result still depends on~$\epsilon$, namely the generalization.

When anonymizing a dataset, one is typically interested in ``corrupting'' the data by the least amount. Indeed, if one carefully thinks about the \k-anonymity problem, it is always possible to find a large enough~$k$ that makes the data anonymous, i.e., if one makes the extreme choice of~$k = n$, then the entire data will be in one equivalence class and thus, \k-anonymity will be achieved. The issue with this is that the information contained in the data will be completely lost. 

In the context of this paper, as the generalization is parametrized by~$\epsilon$,  we are interested for find the smallest value of~$\epsilon$ that gives \k-anonymity. We then have the following definition.

\begin{definition}[Minimal Anonymity Complex] \label{def:mincomplex}
	Given an anonymity complex~$C(p)$ associated to a set of points, lets us denote with $C^\epsilon$ the anonymity complex for a given generalization~$\epsilon$. 
	
	We define as \emph{minimal anonymity complex} the following object
	$$
	C^{\epsilon^*} = \min_\epsilon C^\epsilon\,,
	$$
	such that $C^{\epsilon^*}$ achieves \k-anonymity.
\end{definition}

\medskip 

Even without minimization over $\epsilon$, the \k-anonymity problem known to be an NP-hard problem, and so it is clear that we cannot easily find the minimal anonymity complex. To find an approximate solution to the problem, we instead study the \emph{persistent homology of~$C^\epsilon$}. In particular, in this paper, we adapt the idea of persistent homology as a tool to provide the full spectrum of \k-anonymization one can obtain. Our approach is summarized in Algorithm~\ref{algo:ph}.

\begin{algorithm}[!t]
\caption{\k-anonymity via Persistence Homology}
\label{algo:ph}
\begin{algorithmic}
\STATE {\bf Inputs:} $p=\{p_i\}_1^N \in \mathbb{R}^{d\times N}$, Parameter: $k$, Radii: $\{\epsilon_1,\dots, \epsilon_M\}$\\
\STATE Re-scale the dataset into the unit cube $[0,1]^d \subset \mathbb{R}^{d\times N}$.
\FOR{every value of $\epsilon \in \{\epsilon_1,\dots, \epsilon_M\}$}
\STATE Construct the anonymity complex $C^\epsilon(p)$
\ENDFOR
\STATE Compute \emph{weighted} persistent homology
\RETURN Complete bar code diagram
\end{algorithmic}
\end{algorithm}


Note that the parameter~$\epsilon$ induces a family of complexes such that $C^{\epsilon_1} \stackrel{\imath}{\hookrightarrow} C^{\epsilon_2}  \stackrel{\imath}{\hookrightarrow} \cdots \stackrel{\imath}{\hookrightarrow} C^{\epsilon_M}$, where $\epsilon_i \leq \epsilon_j$, for any $i < j$, and thus we recover the same setting as in the persistent homology. Formally we have a $\epsilon$-based filtration~\cite{ghrist2008}. The idea here is to leverage the barcodes or persistent diagram to extract regimes of interests, namely anonymity strategies -- values of $\epsilon$ -- that lead to $k$-anonymity for different values of~$k$.

\begin{table}[!t]
	\centering
	\subfloat[Example data set with $Q_T = \{\mathrm{Age},\mathrm{ZIP\:Code}\}$.]{\begin{tabular}{ccc}
			\hline
			Age & ZIP Code & Salary\\
			\hline\hline

			25 & 47677 & \$47,000\\
			22 & 47602 & \$32,000\\
			24 & 47678 & \$52,000\\
			43 & 47905 & \$151,000\\
			52 & 47909 &\$145,000\\
			38 & 47906 & \$98,000\\
			47 & 47605 & \$110,000\\
			36 &  47673 & \$92,000\\
			32 & 47607 & \$115,000\\
			\hline
		\end{tabular}}\hspace{0.4cm} 
		\subfloat[3-anonymized table]{\begin{tabular}{cc}
				\hline
				Age & ZIP Code\\
				\hline\hline
				{[}22-25{]} & {[}47602-47678{]}\\
				{[}22-25{]} & {[}47602-47678{]}\\
				{[}22-25{]} & {[}47602-47678{]}\\
				{[}38-52{]} & {[}47905-47909{]}\\
				{[}38-52{]} & {[}47905-47909{]}\\
				{[}38-52{]} & {[}47905-47909{]}\\
				{[}32-47{]} & {[}47605-47603{]}\\
				{[}32-47{]} & {[}47605-47603{]}\\
				{[}32-47{]} & {[}47605-47603{]}\\
				\hline
			\end{tabular}}
			\caption{Sample data set for illustrative purpose.}\label{tbl:example_1}
	\end{table}
	
Such a persistent diagram has two specific features:
\begin{itemize}
	\item Each bar in $H_0$ diagram has a \emph{weight} corresponding to the number of elements in the connected components;
	\item Given a value~$k$, we only consider bars that have at least~$k$ elements.
\end{itemize}
	
To illustrate the proposed approach, consider the sample dataset shown in part (a) of Table~\ref{tbl:example_1}. The quasi-identifier set $Q_T = \{\mathrm{Age},\mathrm{ZIP\:Code}\}$, while the salary field is the sensitive one. Figure~\ref{fig:barcode} shows an example of the output of the barcode diagram. For the case of 3-anonymity in particular, the lower center figure depicts $H_0$ while the one in top center shows the $H_1$. We can see that a \emph{hole} gets created for the values of the radius approximately in $[0.11, 0.13]$ and then it gets \emph{filled} thereby making the dataset 3-anonymous. But again in the interval $[0.17, 0.19]$, a hole gets created around when the complex $[1,2,3,7,8,9]$ gets formed. After~$\epsilon$ increases more, this hole disappears and we are left with an anonymity 3-simplex $[4,5,6]$ and an anonymity 6-simplex $[1,2,3,7,8,9]$. 

We can clearly see three regimes where 3-anonymity is possible. The first one (as indicated) has three classes with three elements in each. The second regime corresponds to the values of radius in the interval $[0.19, 0.4]$ which has only two equivalence classes, and the third one (radius greater than $0.4$) being the trivial solution where there is only one class with nine elements.

For minimal data quality loss, the interval $[0.17,0.19]$ is the best solution as 3-anonymity can be reached with largest number of classes. Mapping the first one back to the dataset yields a 3-anonymous version of the dataset shown in part (b) of Table~\ref{tbl:example_1}.

\begin{figure*}[ht!]
	\centering
	\includegraphics[width=\hsize,trim={0.22cm 0.25cm 0.22cm 0.12cm},clip]{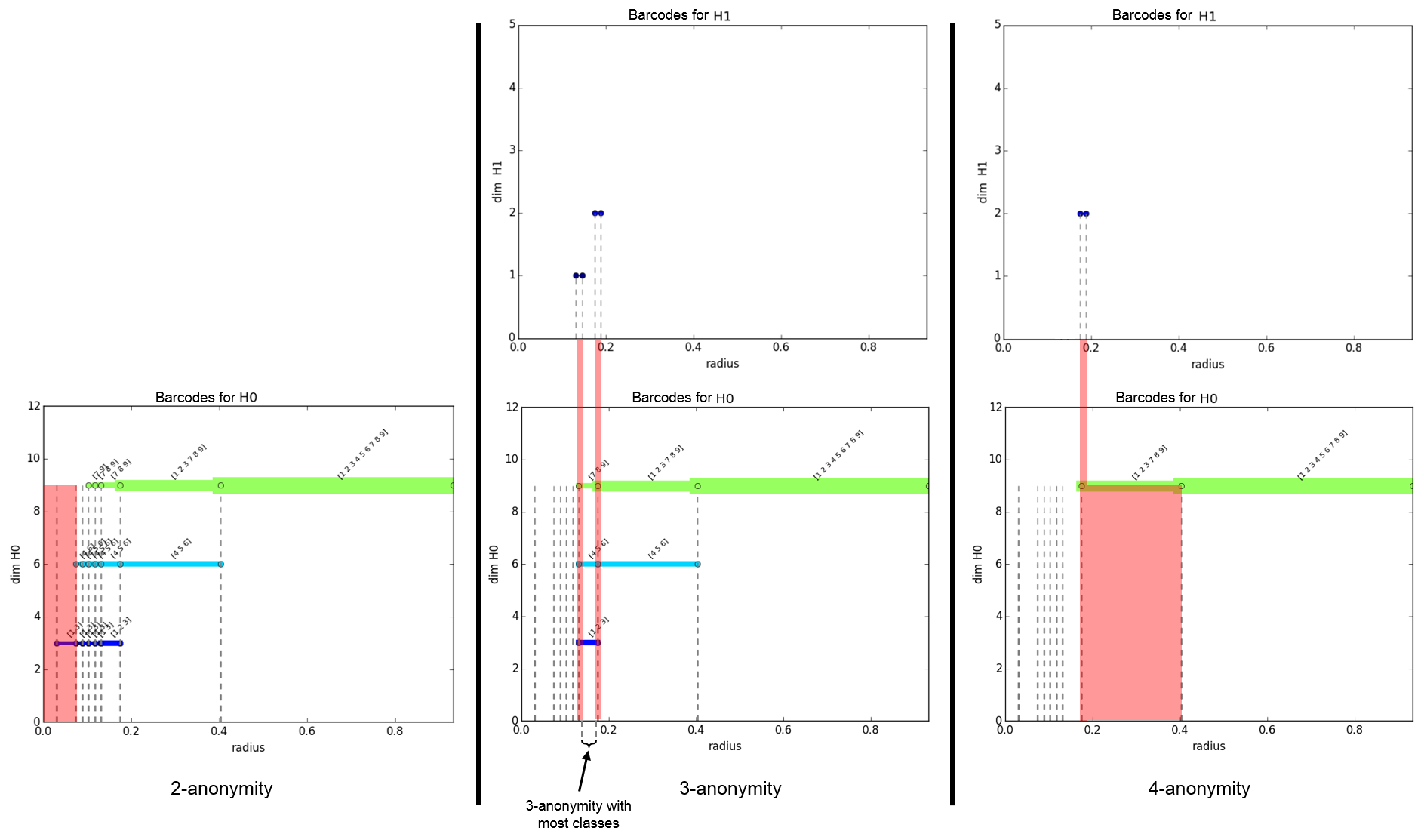}
	\caption{ \small Weighted barcode showing the full spectrum of anonymity regimes. Although the barcode is only a single diagram, we split it here into three diagrams where we only show the simplices that meet the requirements of the \k-anonymity indicated underneath each figure. With red bars we show regimes that cannot achieve 2,3,4-anonymity. These correspond to either situations where not all the $k$-simplices admit trivial higher order homology groups, as is the case for 3-anonymity between $[0.11,0.13]$ and $[0.17,0.19]$, not all the simplices are $\ell$-simplices with $\ell \geq k$. This happens for example in 2-anonymity in the interval $[0,0.08]$ and between $[0.19,0.4]$ for 4-anonymity. We only show here $\dim H_0$ and $\dim H_1$ as higher order homology groups are trivial for this simple example.}\label{fig:barcode}
\end{figure*}

In Figure~\ref{fig:barcode} shows that 2-anonymity can be achieved for generalizations $\epsilon > 0.08$. For generalizations with $\epsilon \leq 0.08$, we have only one anonymity 2-simplex and the rest of the data will be just points, thus 2-anonymity cannot be achieved. The rightmost plot shows that 4-anonymity can be reached for $\epsilon > 0.4$. Note that even if there is an anonymity 6-simplex in the interval $[0.19,0.4]$, there is no way for the $[4,5,6]$ complex to achieve 4-anonymity and thus the generalizations in the interval $[0.19,0.4]$ will not yield 4-anonymity. For $\epsilon > 0.4]$, we can clearly achieve 4-anonymity, but as everything gets into a single class, all the data will be generalized to the same record and thus data quality is compromised.	
		
\begin{remark}[Advantages of proposed approach]\label{rem:adv}
The main advantage of the method proposed is that it enables us not only to find a \k-anonymization for a fixed~$k$ (if it exists), but also to provide alternative regimes that can be especially useful when \k-anonymity cannot be achieved for the given~$k$. This generally is not something that other algorithms, such as~\cite{lefevre2005,lefevre2006,kim2013} directly provide. Indeed, one would need to run the same algorithms for various values of~$k$ to obtain the same tradeoff picture as we obtain. The persistent diagram we consider in this paper is instead computed in \emph{one shot} from the filtration $\{C^\epsilon\}$. Furthermore, we leverage very scalable algorithms for such computation based on discrete Morse theory, see~\cite{mischaikow2013,Perseus:15}. Also, note that not only the persistent diagram allows us to determine the right regime that gives the desired \k-anonymity, but we can also look at the other important tradeoff parameter such as the number of classes. For a given~$k$, it is indeed possible to find various generalizations~$\epsilon$ that meet the \k-anonymity requirement, however some might lead to equivalence classes with many more than~$k$ elements that is in general not  desirable. 
\end{remark}

\section{Zig-Zag Persistent Homology for Mixed Data}\label{sec:approach2}

The methodology we have described in the previous section has clearly nice properties but has some limitations. First, it is restricted to numerical attributes where the notion of a radius is well defined and thus it would not work in the case where attributes are categorical in nature, such as, for example, strings, labels, social security numbers, etc. Second, growing balls (or polytopic approximations) in high dimensional spaces and determining intersections can be computationally challenging. In this section, we discuss an extension that leads to weighted persistent diagrams as the one described in the previous section and that can be used to explore anonymity tradeoffs. 
	
\begin{figure*}[t!]
	\centering
	\subfloat[Example of generalization for two set of labels. On the top for \emph{countries} and on the bottom for \emph{gender}.]{\includegraphics[width=0.65\hsize]{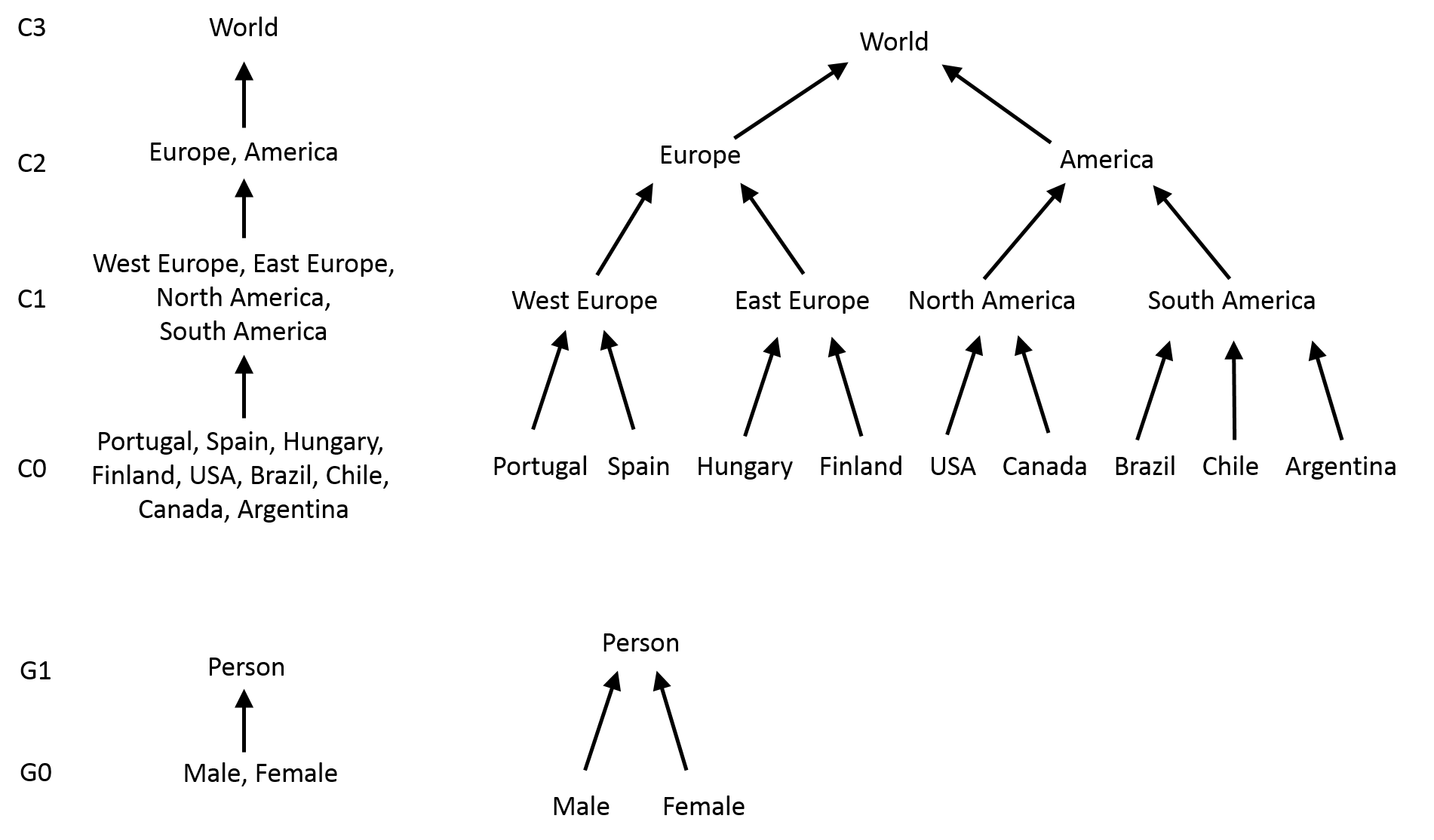}\label{fig:ex_generalizations}}\hfill
	\subfloat[Generalization lattice for the attributes \emph{countries} and  \emph{gender}.]{\includegraphics[width=0.31\hsize]{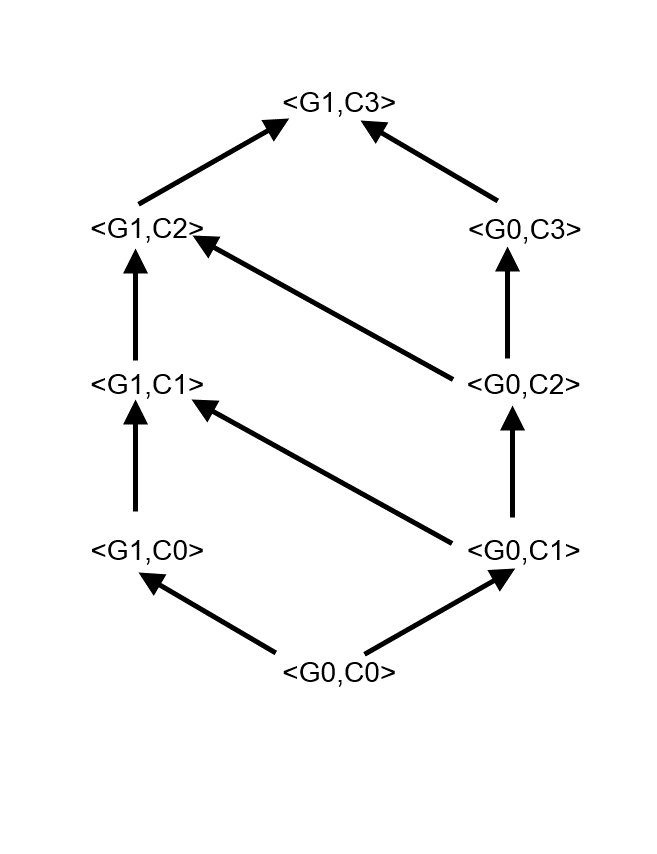}\label{fig:ex_lattice}}
	\caption{In (a) we show  two generalization trees and in (b) we show the corresponding generalization lattice.}
\end{figure*}	
	
For anonymization of categorical data, one needs to specify \emph{generalization trees} that enforce a certain partial order between the various generalizations. Let us consider a simple example where there are two attributes: $Countries$ and $Gender$. Examples of generalization tree are shown in Figure~\ref{fig:ex_generalizations} where, as one moves from leaves to the root, the generalization becomes increasingly coarser, see for example~\cite{lefevre2005}. We thus have that $\{Portugal,Spain\} \prec \{West\,Europe\}$ and $\{West\,Europe, East\,Europe\} \prec \{Europe\}$, etc. or where $\{Male, Female\} \prec \{Person\}$. We may capture the generalization of numerical values within trees as well.

Formally, we have that a generalization tree is a map $\mathcal{T}:\mathcal{N}\times[0,r] \rightarrow \mathcal{A}$ where $\mathcal{N}$ is the set all nodes in the trees (including root and leaves) and $[0,r]$ is the level of the generalization. For example, $\mathcal{T}(\{USA,Canada\},2) = America$. For this setting, we can extend the notion of anonymity complex as follows.
	
\begin{definition}[Generalized Anonymity Complex]
	Given a table~$T$ with~$N$ rows and a set of quasi-identifiers~$Q_T$ as well as a set of generalization trees $\mathcal{T}_k$ for $k=1, \dots, |Q_T|$, we define the \emph{generalization anonymity complex}~$\Gamma$ as the simplicial complex whose \k-simplices are determined by $(k+1)$ $|Q_T|$-dimensional tuples~$\{A_i\}_{i=1}^k$ such that there exists a $\bar{r}$ such that $\mathcal{T}_i(A_i,\bar{r}) = \mathcal{T}_j(A_j,\bar{r})$.
\end{definition}		

\medskip 

The definition of a generalized anonymity \k-simplex can be obtained as well. For example, $\{Portugal, Spain, Hungary\}$ forms a generalized anonymity 3-simplex for the generalization level 2, considering the tree in Figure~\ref{fig:ex_generalizations}.
	
\medskip

\begin{definition}[Generalization lattice]
	Given a $|Q_T|$-dimensional tuple representing attributes and the associated trees $\mathcal{T}_k$ for each attribute, we can construct a directed graph (lattice) where the vertices are the tuples $$<\mathcal{T}_1(A_1,s),\mathcal{T}_2(A_2,s_2),\dots,\mathcal{T}_{|Q_T|}(\gamma_{|Q_T|},s_{|Q_T|})>$$ and there is an edge between two vertices if there exists a generalization $s_i+1$ for an attribute $A_i$.
\end{definition}

Figure~\ref{fig:ex_lattice} shows an example of a generalization lattice for the attributes $\{Country, Gender\}$.

\medskip

Given the generalization trees and the previous definitions, we can see the similarity with the previous section, where we were interested in searching for regimes where the  anonymity complexes have trivial high-order homology groups. The critical difference is, however, that these generalizations are not dependent on a single parameter ($\epsilon$ in the previous section) anymore and thus we do not have a filtration in general.

In fact, the anonymity complexes, for various degrees of generalization, satisfy the following commutative diagram\footnote{We denote with $C^{00}$ the generalized anonymity complex corresponding to the generalization $<G_0,C_0>$, as in Figure~\ref{fig:ex_lattice}}:
\begin{figure}[h]
	\centering
	\includegraphics[height=2cm]{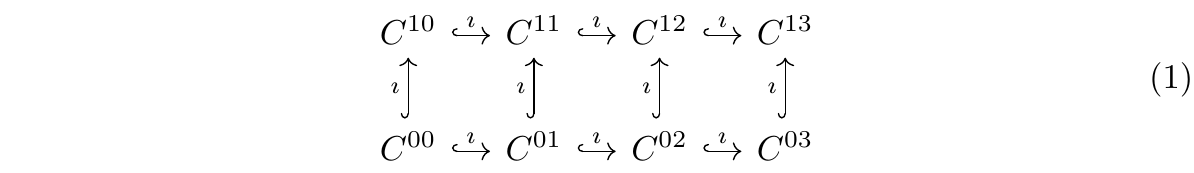}
\end{figure}
which is clearly not a filtration, but rather a \emph{multi-dimensional filtration}. Unfortunately, computation of the barcodes assoiated to a multi-dimensional persistent homology is still an open question, although progress has been made in recent years, see~\cite{KX-GWW:14} and references therein. 

There are two ways we can tackle this problem, using either some further assumptions on the generalization or an approximation. The first is via \emph{zig-zag persistent homology}~\cite{GC-VdS-MD:09}. In this context, we need to assume an ordering of the generalizations, something that is not uncommon, see~\cite{kim2013}. Given the structure of the problem, a reasonable sequence of spaces to be considered would be $C^{00},C^{01},C^{02},C^{03},C^{10},C^{11},C^{12},C^{13}$ leading to the following commutative diagram:
$$
	\underbrace{C^{00} \stackrel{\imath}{\hookrightarrow} C^{01} \stackrel{\imath}{\hookrightarrow} C^{02} \stackrel{\imath}{\hookrightarrow} C^{03}}_{C^{00-3}} \leftrightarrow \underbrace{C^{10} \stackrel{\imath}{\hookrightarrow} C^{11} \stackrel{\imath}{\hookrightarrow} C^{12} \stackrel{\imath}{\hookrightarrow} C^{13}}_{C^{10-3}}\,.
$$
Note that the middle map is not an inclusion. It is possible to still compute persistent homology of the full chain by ``joining'' the two subsequences of spaces $C^{00-3}$ and $C^{10-3}$ through their union as follows:
\begin{figure*}[h]
	\centering
	\includegraphics[height=1.4cm]{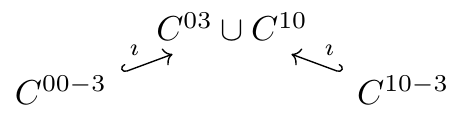}
\end{figure*}
In this case, by using Mayer-Vietoris pyramid principle~\cite{GC-VdS-MD:09}, we can compute the persistent homology of the chain and by passing through the union we obtain the barcodes for the initial chain~\cite{GC-VdS-MD:09}. Given the barcodes we are now in the same situation as in the previous section where we can search for different \k-anonymity regimes.

Note however that the previous barcode would be different than the one built for the following sequence, say $C^{00}, C^{10}, C^{01}, C^{11}, C^{02}, C^{12}, C^{03}, C^{13}$. In this case we would need to apply Mayer-Veitoris pyramid principle multiple times:
\begin{figure*}[h]
	\centering
	\includegraphics[height=4cm]{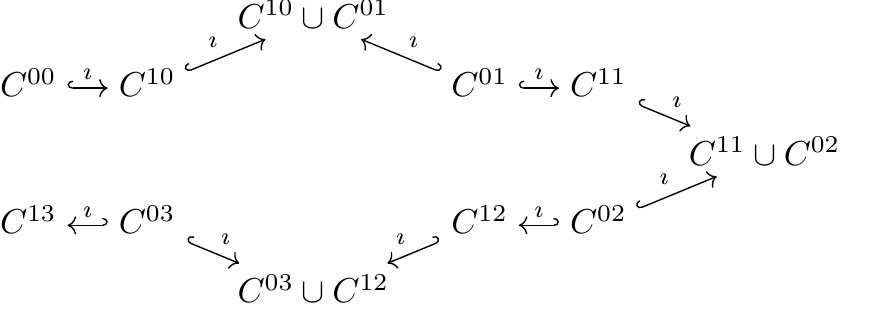}
\end{figure*}
	
An alternative approach to compute the barcodes for~\cd_lattice, is via an approximation. Specifically, one can consider the persistence equivalence theorem~\cite[Section  2.4.2]{Nanda:12} where, given the commutative diagram in~\cd_lattice, one can compute the persistent homology for the lower and upper chains. Because of the inclusion maps we have that the persistent homology modules of the two chains respect the inclusions. We can then proceed as follows: we first compute the persistent homology of the lower chain in the commutative diagram~\cd_lattice and if it achieves the desired \k-anonymity, then, because of the inclusion maps, such \k-anonymity can be achieved also by the upper chain in~\cd_lattice, thus we do not need to compute the barcodes for the upper chain and stop after obtaining the barcodes for the lower one.

However, if \k-anonymity is not achieved considering the lower chain, then persistent homology and corresponding barcodes need to be computed for the upper chain. Note that as we are approximating a multi-dimensional persistent homology by a sequence of (independent) persistent homology computation, if \k-anonymity cannot be achieved following such process we cannot conclude that there is not a way to \k-anonymize the data. This is a consequence of the natural complexity of the anonymity problem.

\section{Conclusion}
This paper introduced a new perspective to \k-anonymity in data privacy based on algebraic topology. In particular, we addressed the case when the data lies in a metric space. We demonstrated how tools such as persistence homology can be applied to efficiently obtain the entire spectrum of \k-anonymity of the database for various values of the radius of proximity. For this representation, we provided an analytic characterization of conditions under which a given representation of the dataset is \k-anonymous. Finally, we discussed how this method can be extended to address the general case of a mix of categorical and metric data. 

In future, it would be interesting to investigate other notions of privacy using these tools. In particular, applicability of such techniques to dynamic databases would be an interesting case which naturally arise in control applications. 

\section{Acknowledgments}

The authors would like to thank Vidit Nanda for discussions on zig-zag persistent homology and the Perseus software used in this paper to compute persistent homology.

\bibliographystyle{plain}
\bibliography{references_privacy}

\end{document}